\numberwithin{equation}{section}
\theoremstyle{plain}
\newtheorem{thm}{Theorem}[section]
\newtheorem{cor}[thm]{Corollary}
\newtheorem{lem}[thm]{Lemma}
\newtheorem{de}[thm]{Definition}
\newtheorem{ex}[thm]{Example}
\newcommand{\eqa}{\begin{eqnarray}}
\newcommand{\eeqa}{\end{eqnarray}}
\newcommand{\beq}{\begin{equation}}
\newcommand{\eeq}{\end{equation}}
\newcommand{\nn}{\nonumber}
\newcommand{\p}{\partial}
\def \be{{\bf 1_n}}
\def \la {\langle}
\def \ra{\rangle}
\def \var{\varepsilon}
\def \trm{\mathrm{tr}\,}
\def \res{\mathrm{res\,}}
\def \dsum{\displaystyle\sum}
\begin{document}

\title[]
{Integrability of the Frobenius algebra-valued KP hierarchy}
\author[]{Ian A.B. Strachan and Dafeng Zuo}

\address{Ian A.B. Strachan,School of Mathematics and Statistics, University of Glasgow}
\email{ian.strachan@glasgow.ac.uk}

\address[]{Dafeng Zuo,School of Mathematical Science, University of Science and
Technology of China, Hefei 230026, P.R.China and Wu Wen-Tsun Key Laboratory of Mathematics,
USTC, Chinese Academy of Sciences and School of Mathematics}

\email{dfzuo@ustc.edu.cn}

\date{\today}

\begin{abstract}
We introduce a Frobenius algebra-valued KP hierarchy
and show the existence of Frobenius algebra-valued $\tau$-function for this hierarchy. In addition we construct its Hamiltonian structures by using the Adler-Dickey-Gelfand method. As a byproduct of these constructions, we show that the coupled KP hierarchy defined by P.Casati and G.Ortenzi in \cite{CO2006} has at least $n$-``basic" different local bi-Hamiltonian structures.
 Finally, via the construction of the second Hamiltonian structures, we  obtain some local matrix, or Frobenius algebra-valued, generalizations
  of classical $W$-algebras.

\end{abstract}

 \keywords{Frobenius algebra-valued KP hierarchy, Hamiltonian structures}

\maketitle 
{ \tableofcontents}
\section{Introduction}

The Kadomtsev-Petviashvili (KP) hierarchy is defined by the set of equations
\beq \dfrac{\p}{\p t_r}L=[B_r,L], \quad r=1,2,\cdots, \label{Z0.2}\eeq
where $L=\p+u_1\p^{-1}+u_2\p^{-2}+\cdots$ is a pseudo-differential operator
with coefficients $u_1$, $u_2$, $\cdots$ being smooth functions of
infinitely many variables ${\bf t}=(t_1,t_2,\cdots)$ with $t_1=x$
and $B_r=L^r_+$ is the pure differential part of the operator $L^r$ and $\p=\frac{\p}{\p x}$.

A fundamental result, due to M.Sato, is the existence of
a $\tau$-function for the KP hierarchy (see the survey \cite{DJKM1983}).
Another fundamental property of this hierarchy is that it has two compatible local Hamiltonian structures. The
first structure was suggested by Watanabe \cite{W1983}, the second by Dickey \cite{D1987} and,
 shortly after that, Radul \cite{R1987} proved that not only one pair of structures can be
built but infinitely many. Essentially, the construction
was a slight modification of the Adler-Gelfand-Dickey (AGD) method for the
$n^{\it th}$-KdV $(\mbox{GD}_n)$ hierarchy  in \cite{A1979,GD1978}. We refer to
\cite{Dick2003} for a more detailed description.

Recently, there are several types of noncommutative generalizations of
the KP hierarchy (see, for example, \cite{Kup2000} and the references therein).
Most of them do not preserve the above two fundamental properties. For example,
the matrix KP \cite{Bial1995} has two compatible Hamiltonian structures via
the AGD method, on utilizing the matrix trace map, but
the second is nonlocal. Furthermore,  there is no $\tau$-function for this hierarchy.

In this paper we study certain properties of a Frobenius algebra-valued KP hierarchy.
The first motivation stems from the work of Casati and Ortenzi \cite{CO2006}.  With the use of
vertex operator representations of polynomial Lie algebras,
they obtained a class of coupled KP hierarchy formulated as a ``coupled Hirota bilinear equation". Shortly afterwards,
Van de Leur in \cite{L2007}, starting from these bilinear equations,
recovered the corresponding  wave functions and Lax equations with
a $\mathcal{Z}_n$-valued Lax operator $L$, where $\mathcal{Z}_n=\mathbb{C}[\Lambda]/(\Lambda^{n})$
is the maximal commutative subalgebra of $gl(n,\mathbb{R})$ and $\Lambda=(\delta_{i,j+1})\in gl(m,\mathbb{R})$.
A natural problem then arises as to how to construct Hamiltonian structures for these coupled KP
hierarchies. The main problem in using the AGD method is that the bilinear form constructed using the usual matrix trace
$\la A,B\ra=\mbox{trace}\, (AB)$ for $A,B\in \mathcal{Z}_n$ is degenerate.
In order to solve this, one of the current authors \cite{Zuo-2013}
introduced a somewhat strange looking trace-type map
$\trm_n: gl(n,\mathbb{R})\longrightarrow\mathbb{R}$
defined by
\beq \trm_n(A)=\mbox{trace}~\left[\left(\begin{array}{ccccc}
\frac{1}{n}& \frac{1}{n-1}&\cdots & 1\\
0&\frac{1}{n}&\ddots &\frac{1}{2}\\
\vdots&\vdots&\ddots&\vdots\\
0&0&\cdots&\frac{1}{n}
\end{array}\right)A\right].\label{DF1.5}\eeq
We remark that this trace-type map is not symmetric on $gl(n,\mathbb{R})$ but when restricted to the subalgebra
$\mathcal{Z}_n$, is nondegenerate and symmetric.

Our second motivation is due to the following crucial observation. Let $\bf{1}_n$ be the identity matrix and $\circ$  the matrix multiplication, then $\{\mathcal{Z}_n, \mathrm{tr}_n, \bf{1}_n, \circ\}$
is a Frobenius algebra. This observation motivates us to study the $\mathcal{A}$-valued KP hierarchy via $\mathcal{A}$-valued
Lax operators, where $\mathcal{A}$ is a Frobenius algebra.

This paper is organized as follows. In section 2, we will show the existence of the $\mathcal{A}$-valued $\tau$-function for the $\mathcal{A}$-KP hierarchy. In section 3, we will construct Hamiltonian structures of the $\mathcal{A}$-valued KP hierarchy.
In section 4, we will list some similar results for the $\mathcal{A}$-valued dispersionless KP hierarchy.
Section 5 is devoted to various conclusions and a discussion of some open problems.


\section{The Frobenius algebra-valued KP hierarchy and its $\tau$-function}

In this section, we will introduce an $\mathcal{A}$-valued KP hierarchy via $\mathcal{A}$-valued
Lax operators and show the existence of an $\mathcal{A}$-valued $\tau$-function for the $\mathcal{A}$-KP hierarchy.

\subsection{Frobenius algebra} We begin with the definition of a Frobenius algebra \cite{Du}.

\begin{de}
A Frobenius algebra $\{ \mathcal{A},\circ,e,\omega\}$ over $\mathbb{R}$ satisfies the following conditions:

\begin{itemize}

\item[{\sl (i)}] $\circ\,: \mathcal{A} \times \mathcal{A} \rightarrow \mathcal{A}$ is
a commutative, associative algebra with unity $e$;

\item[{\sl (ii)}] $\omega \in \mathcal{A}^\star$ defines a non-degenerate inner product
$\langle a,b \rangle = \omega(a \circ b)\,$, which is often called a trace form (or Frobenius form).

\end{itemize}
\end{de}

\begin{ex}(\cite{SZ1}) \label{ex2.2}
Let $\mathcal{A}$ be a two-dimensional commutative and associative algebra with a basis $e=e_1, e_2$
satisfying
\beq e_1\circ e_1=e_1,\quad e_1\circ e_2=e_2, \quad e_2\circ e_2=\var e_1+\mu e_2,\quad \var,\mu \in \mathbb{R},
 \eeq
then $\mathcal{Z}_{2,k}^{\var,\mu}:=\left\{\mathcal{A},\circ, e, \omega_k\right\}$, $k=1,2$
are Frobenius algebras, where
\beq \omega_k(a)=a_k+a_2(1-\delta_{k,2})\delta_{\var,0}, \quad k=1,\, 2,\label{ZZ2.4}\eeq
for $a=a_1e_1+a_2e_2\in\mathcal{A}$.
\end{ex}

\begin{ex}(\cite{SZ1}) \label{ex2.3}
Let $\mathcal{A}$ be an $n$-dimensional  nonsemisimple commutative associative
algebra $\mathcal{Z}_n$ over $\mathbb{R}$ with a unity $e$ and a basis $e_1=e, \cdots,
e_n$ satisfying
\beq e_i \circ e_j=\left\{
\begin{array}{ll}
e_{i+j-1}, & i+j\leq n+1,\\
0,&  i+j=n+2. \end{array}\right.
 \eeq
Taking $\Lambda=(\delta_{i,j+1})\in gl(m,\mathbb{R})$, one obtains
a matrix representation of $\mathcal{A}$ as
 \beq e_j \mapsto \Lambda^{j-1}, \quad j=1,\cdots, n.\nn \eeq
Similarly, for any $a=\dsum_{k=1}^{n} a_k e_k\in \mathcal{A}$,
we introduce $n$ trace-type forms, called ``basic" trace-type forms,  as follows
\beq \omega_{k-1} (a)=a_k+a_{n}(1-\delta_{k,n}),\quad k=1,\cdots,n.\label{ZZ2.3} \eeq
Every trace map $\omega_k$ induces a nondegenerate
symmetric bilinear form on $\mathcal{A}$ given by
 \beq \la a, b\ra_k:=\omega_k (a\circ b),\quad a,\,b\in \mathcal{A},\quad k=0,\cdots,n-1.\label{AZ2.6}\eeq
Thus all of $\left\{\mathcal{A}, \circ, e, \omega_{k-1}\right\}$ are nonsemisimple
Frobenius algebras, denoted by $\mathcal{Z}_{n,k-1}$ for $k=1,\cdots,n$.
We remark that the trace-type map $\mathrm{tr}_n$ in \eqref{DF1.5} is exactly
a linear combination of $n$ ``basic" trace-type forms as
\beq \mathrm{tr}_n:=\dsum_{s=0}^{n-1}\omega_{s}-(n-1)\,\omega_{n-1}.\nn\eeq
\end{ex}

\noindent Unless otherwise stated, we assume that $\left\{\mathcal{A},\circ, e:=\be, \omega:=\mathrm{tr}\right\}$ is an
$n$-dimensional Frobenius algebra over $\mathbb{R}$ with the basis $e_1=\be, e_2\cdots,e_n$.

\subsection{The $\mathcal{A}$-valued KP hierarchy}
Let
\beq L=\be\p+U_1\p^{-1}+U_2\p^{-2}+\cdots, \quad \p=\dfrac{\p}{\p x},\label{SZ2.2}\eeq
be an $\mathcal{A}$-valued pseudo-differential operator \rm{($\Psi$DO)} with coefficients $U_1$, $U_2$, $\cdots$ being smooth
$\mathcal{A}$-valued functions of an infinite many variables $t=(t_1,t_2,\cdots)$ and $t_1=x$.
\begin{de}  The $\mathcal{A}$-KP hierarchy is  the set of equations
\beq \dfrac{\p L}{\p t_r}=[B_r,L]:=B_r\circ L-L\circ B_r, \quad B_r=L^r_{+}, \quad r=1,2,\cdots, \label{SZ2.3}\eeq
where $B_r$ is the pure differential part of the operator $L^r=\underbrace{L\circ \cdots \circ L}_{r\, \rm{terms}}$.
\end{de}

Generally, by imposing the constraint $(L^m)_{-}=0$, the
$\mathcal{A}$-KP hierarchy \eqref{SZ2.3} reduces to the $\mathcal{A}$-$\mbox{GD}_m$ hierarchy.
The $\mathcal{A}$-KP hierarchy is equivalent to
 \beq \dfrac{\p B_l}{\p t_r}-\dfrac{\p B_r}{\p t_l}+[B_l,B_r]=0. \label{Z2.4} \eeq
Consider a case ($r=2$,\, $l=3$), then the system \eqref{Z2.4} becomes
\beq
 U_{1,t_2}=U_{1,xx}+2U_{2,x},\quad 2 U_{1,t_3}=2 U_{1,xxx}+3\,U_{2,xx}+3\,U_{2,t_2}+6\,U_1\circ U_{1,x}.\label{Z2.5}
\eeq
If we eliminate $U_2$ in \eqref{Z2.5} and rename $t_2=y$, $t_3=t$ and $\mathcal{U}=U_1$, we obtain
\beq (4\mathcal{U}_t-12\mathcal{U}\circ \mathcal{U}_x-\mathcal{U}_{xxx})_x-3\mathcal{U}_{yy}=0.\label{Z2.6}\eeq
All this follows the scalar case verbatim. But as the following example shows, when written in terms of a specific basis
this structure is broken and the underlying Frobenius algebra is hidden.

\begin{ex}
Suppose that $\mathcal{A}$ is the $\mathcal{Z}_{2,2}^{\var,\mu}$ algebra and
$\mathcal{U}=v e_1+w e_2$. Then the system \eqref{Z2.6} in component form is
\beq \left\{\begin{array}{l}
(4v_{t}-12vv_{x}-v_{xxx}-12\var\,ww_{x})_x-3v_{yy}=0,\\
(4w_{t}-12(vw)_x-w_{xxx}-12\mu\,ww_{x})_x-3w_{yy}=0.
\end{array}\right.\label{Z2.7} \eeq
When $\var=\mu=0$, the system \eqref{Z2.7} reduces to the coupled KP equation (e.g.\cite{CO2006,Zuo-2013}).
Furthermore, if $v_y=w_y=0$, the coupled KP equation reduces to the coupled KdV equation
(\cite{FR1989,MF1996, HXT2003})
%
\beq \left\{\begin{array}{l}
4v_{t}-12vv_{x}-v_{xxx}=0,\,\\ 4w_{t}-12(vw)_x-w_{xxx}=0.
\end{array}\right.\label{CKDV}\eeq
\end{ex}

\noindent Thus certain multicomponent examples that have appeared in the literature are best viewed as a single $\mathcal{A}$-valued
equation: writing them in terms of basis-dependent component fields obscures the underlying algebraic structure.

\subsection{The $\tau$-function}
Let us represent the $\mathcal{A}$-valued Lax operator $L$ in \eqref{SZ2.2} in a dressing form
\beq L=\Phi^{-1}\circ \be \p\circ \Phi, \quad \mbox{$\Phi=\dsum_{i=0}^\infty W_i \p^{-i}$ with $W_0=\be$,}\label{A.1} \eeq
where the $\mathcal{A}$-valued dressing operator $\Phi$ is determined up to a multiplication
on the right
by $\be+ \dsum_{k=1}^\infty C_k \p^{-k}$ with arbitrary constant elements $C_k\in\mathcal{A}$.
Then using \eqref{SZ2.3}, we obtain
\beq \p_r\Phi=-L_{-}^r\circ\Phi, \qquad \p_r=\frac{\p}{\p t_r}.\label{A.2}\eeq
For simplicity, let $\xi(t,z)=\dsum_{k=1}^\infty t_kz^k$ and $\widehat{W}(t,z)=\dsum_{i=0}^\infty{W}_iz^i$, where
$z\in\mathbb{C}$ is a parameter.
The wave function of the $\mathcal{A}$-KP hierarchy \eqref{SZ2.3} is defined by
the $\mathcal{A}$-valued function
\beq W(t,z):=\Phi e^{\xi(t,z)}=\widehat{W}(t,z)e^{\xi(t,z)}.\label{A.4}\eeq
Similarly, the adjoint wave function is given by
\beq \widetilde{W}(t,z):=(\Phi^{-1})^* e^{-\xi(t,z)}=\widehat{\tilde{W}}(t,z)e^{\xi(t,z)}.\eeq

\begin{lem} The following  identities hold:
\eqa &(1).& \quad \mathrm{res}_z\big(\p^{i_1}_1\cdots \p^{i_k}_k W(t,z)\big) \circ\widetilde{W}(t,z)=0,\quad i_j\in \mathbb{Z}_{\geq 0}; \label{A.7} \\
&(2).&\quad \widehat{W}(t,z)^{-1}=G(z)[\widehat{\tilde{W}}(t,z)];\label{A.8}\\
&(3).& \quad \p \ln\widehat{W}(t,z)=W_1(t)-G(z)[W_1(t)],\label{A.9}
\eeqa
where $G(z)$ is a shift operator defined by
\beq G{(z)}[f(t;z,s)]=f(t_1-\frac{1}{z}, t_2-\frac{1}{2z^2},\cdots;z,s). \label{A.6}\eeq
The identity \eqref{A.7} is called the $\mathcal{A}$-valued bilinear identity. \end{lem}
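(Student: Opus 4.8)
The plan is to adapt the standard scalar KP proof (as in Dickey's book) to the Frobenius algebra-valued setting, using the fact that the algebra $\mathcal{A}$ is commutative and associative with a nondegenerate trace form. The key point throughout is that the dressing operator $\Phi$ and its formal adjoint satisfy the same flow equations as in the scalar case, with the scalar multiplication replaced by the commutative product $\circ$ — so the algebraic manipulations go through componentwise once I fix a basis, or equivalently hold verbatim at the level of $\mathcal{A}$-valued symbols.

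Let me reconstruct the proof architecture. **First, for identity (1),** I would establish the residue-pairing lemma for $\mathcal{A}$-valued $\Psi$DOs: for any two operators $P,Q$ one has the classical fact $\mathrm{res}_z\big((Pe^{\xi})\circ(Q^* e^{-\xi})\big)=\mathrm{res}_\partial(P\circ Q)$, where $\mathrm{res}_\partial$ picks out the $\partial^{-1}$-coefficient. This identity is purely formal and survives replacement of scalars by the commutative algebra $\mathcal{A}$. Applying it with $P=\partial_1^{i_1}\cdots\partial_k^{i_k}\,\Phi$ and $Q=\Phi^{-1}$ gives $\mathrm{res}_\partial\big((\partial_1^{i_1}\cdots\partial_k^{i_k}\Phi)\circ\Phi^{-1}\big)$. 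The operator $(\partial_1^{i_1}\cdots\partial_k^{i_k}\Phi)\circ\Phi^{-1}$ is a purely negative-order $\Psi$DO of order $\leq -1$ — here I use the flow equations \eqref{A.2}, $\partial_r\Phi=-L_-^r\circ\Phi$, iterated, so that each $\partial_r$ lowers the order by at least one; but crucially the constant term $W_0=\be$ and the form of $\Phi$ force the leading term to vanish, making the residue zero. So the main content of (1) is the bookkeeping that all the differentiated pieces land in order $\leq -1$ after right-multiplication by $\Phi^{-1}$.

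**For identities (2) and (3),** I would take the bilinear identity \eqref{A.7}, substitute the explicit wave functions $W=\widehat W e^{\xi}$ and $\widetilde W=\widehat{\widetilde W}e^{\xi}$, and perform the residue computation in the spectral parameter $z$. The exponential factors combine to produce the shift operator $G(z)$ defined in \eqref{A.6}: expanding $e^{\xi(t,z)-\xi(t',z)}$ and matching the Taylor shift $t_r\mapsto t_r-\tfrac{1}{rz^r}$ is exactly the classical calculation, now carried out with $\circ$-multiplication. Setting the $i_j=0$ case of \eqref{A.7} and extracting the coefficient of $z^{-1}$ yields $\widehat W(t,z)^{-1}=G(z)[\widehat{\widetilde W}(t,z)]$, which is (2). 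For (3), I would differentiate $\ln\widehat W$ with respect to $x=t_1$ and use (2) together with the structure of the leading coefficients $W_1$; the identity $\partial\ln\widehat W = W_1 - G(z)[W_1]$ is the logarithmic form of the standard relation between the wave function and the first Lax coefficient, and it follows by comparing the $z^{-1}$-coefficients in the $\partial_x$-derivative of the bilinear identity.

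**The main obstacle** I anticipate is the logarithm in identity (3): $\ln\widehat W$ is an $\mathcal{A}$-valued formal object, and I must make sense of it when $\mathcal{A}$ is nonsemisimple (as in Example \ref{ex2.3}, where $\mathcal{A}=\mathcal{Z}_n$ contains nilpotents). The resolution is that $\widehat W = \be + O(z^{-1})$ has leading term the unit $e=\be$, so $\ln\widehat W$ is defined by the usual power series $\sum_{j\geq 1}(-1)^{j-1}(\widehat W-\be)^{\circ j}/j$, which converges formally in $z^{-1}$ precisely because $\widehat W-\be$ has no constant term; commutativity of $\circ$ guarantees that $\partial\ln\widehat W = (\partial\widehat W)\circ\widehat W^{-1}$ exactly as in the scalar case, with no ordering ambiguity. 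Once this is in place, everything reduces to the scalar computation applied coefficientwise, and the Frobenius structure enters only through ensuring the trace form is available for the later Hamiltonian constructions rather than for the bilinear identity itself.
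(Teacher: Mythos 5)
Your architecture for items (2) and (3) matches the paper's route: Taylor-shift the bilinear identity to get $\mathrm{res}_z\, W(t,z)\circ G(\zeta)[\widetilde W(t,z)]=0$, use $e^{\sum_k z^k/(k\zeta^k)}=(1-z/\zeta)^{-1}$ to evaluate the residue for (2), and differentiate in $x$ first for (3). Your observation that $\ln\widehat W$ must be defined as the formal series in $\widehat W-{\bf 1}_n$ (legitimate because the leading coefficient is the unit, even when $\mathcal{A}$ contains nilpotents, and unambiguous because $\circ$ is commutative) is correct and is a point the paper leaves implicit.

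There is, however, a genuine gap in your proof of (1). You claim that $(\partial_1^{i_1}\cdots\partial_k^{i_k}\Phi)\circ\Phi^{-1}$ is a $\Psi$DO of order $\le -1$, because iterating $\partial_r\Phi=-L_-^r\circ\Phi$ lowers the order, and that this makes the residue vanish. Both halves fail. The inference is invalid: $\mathrm{res}_\partial$ extracts the coefficient of $\partial^{-1}$, so an operator of order $\le -1$ generically has \emph{nonzero} residue --- indeed $\mathrm{res}_\partial\big((\partial_r\Phi)\circ\Phi^{-1}\big)=-\mathrm{res}_\partial L^r_{-}=-\mathrm{res}\,L^r$, which is not zero in general. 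Moreover, the operator you should feed into the pairing formula $\mathrm{res}_z(Pe^{xz})\circ(Q e^{-xz})=\mathrm{res}_\partial P\circ Q^*$ is not $\partial^\alpha\Phi$ but the $\Psi$DO $P$ with $Pe^{\xi}=\partial^\alpha W$; differentiating $W=\Phi e^{\xi}$ also hits the exponential, so $\partial_r W=(\partial_r\Phi)e^{\xi}+z^r\Phi e^{\xi}=(B_r\circ\Phi)e^{\xi}$. The correct mechanism --- and the paper's --- is the opposite of yours: since $\partial_r W=B_r\circ W$ with $B_r$ purely differential, one reduces to $x$-derivatives only, and then $P\circ\Phi^{-1}=\partial^i\circ\Phi\circ\Phi^{-1}=\partial^i$ is a \emph{purely differential} operator, whose $\partial^{-1}$-coefficient vanishes. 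A related slip: your step for (2) cannot start from ``the $i_j=0$ case of \eqref{A.7}'' alone, since the shifted identity requires \eqref{A.7} for all multi-indices ($G(\zeta)$ being the Taylor shift $\exp(-\sum_k\tfrac{1}{k\zeta^k}\partial_k)$), though your earlier mention of expanding in $t'-t$ suggests you intended this.
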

\begin{proof} (1). By definitions,
\beq L^r\circ W(t,z)=(\Phi \circ \p \Phi^{-1})^r \circ\Phi e^{\xi(t,z)}=z^r W(t,z). \label{A.10}\eeq
Using \eqref{A.2} and \eqref{A.8}, we have
\beq \p_r W(t,z)=\p_r(\Phi e^{\xi(t,z)})=(\p_r\Phi) e^{\xi(t,z)}+\Phi (\p_re^{\xi(t,z)})=L_+^rW(t,z).\label{A.11}\eeq
With \eqref{A.9}, it suffices to consider only the case when all $i_j$ for $j>1$ vanish. Then
\eqa \mathrm{res}_z\big(\p^i W(t,z)\big) \circ\widetilde{W}(t,z)=
\mathrm{res}_z\big(\p^i \Phi e^{xz}\big) \circ\big((\Phi^{-1})^* e^{-xz}\big)
=\mathrm{res}_{\p}\p^i\Phi\circ\Phi^{-1}=0.\nn\eeqa
In the second step, we use a simple formula $\mathrm{res}_z (P e^{xz}\big) \circ\big(Q e^{-xz}\big)=\mathrm{res}_{\p}P\circ Q^*$, where $P$ and $Q$ are two $\mathcal{A}$-valued $\Psi$DOs.

(2). The bilinear identity \eqref{A.7} implies
\beq \mathrm{res}_z W(t,z)\circ G(\zeta)[\widetilde{W}(t,z)]=0.\label{A.12}\eeq
Using \eqref{A.12} and the identity $e^{\sum_{k=1}^\infty \frac{z^k}{k\zeta^k}}=(1-\frac{z}{\zeta})^{-1}$,
we obtain
\beq 0=\mathrm{res}_z \widehat{W}(t,z)\circ G(\zeta)[\widehat{\tilde{W}}(t,z)](1-\frac{z}{\zeta})^{-1}
=\zeta\big(\widehat{W}(t,\zeta)\circ G(\zeta)[\widehat{\tilde{W}}(t,z)]-\be\big)\nn \eeq
which yields the identity \eqref{A.8}.

(3). Similarly, from the bilinear identity \eqref{A.7} we have
$$\mathrm{res}_z \p W(t,z)\circ G(\zeta)[\widetilde{W}(t,z)]=0.$$
So using \eqref{A.8}, we get
\eqa 0
&=&\mathrm{res}_z \p \widehat{W}(t,z)\circ G(\zeta)[\widehat{\tilde{W}}(t,z)](1-\frac{z}{\zeta})^{-1}\nn\\
&=&\big(\p \widehat{W}(t,\zeta)+\zeta \widehat{W}(t,\zeta)\big)\circ G(\zeta)[\widehat{\tilde{W}}(t,\zeta)]-\be \zeta-W_1(t)+G(\zeta)[W_1(t)]
\nn\\
&=& \p \widehat{W}(t,\zeta)\circ \widehat{W}(t,\zeta)^{-1}-W_1(t)+G(\zeta)[W_1(t)] \nn
\eeqa
which implies the identity \eqref{A.9}.
\end{proof}

We are now in a position to state the main theorem in the this section, which
can be regarded as an $\mathcal{A}$-valued counterpart of Sato's theorem
\cite{DJKM1983,D1997,Dick2003,SS1982} for the scalar KP hierarchy.

\begin{thm}There is an $\mathcal{A}$-valued function $\tau=\tau(t)$ such that
\beq \widehat{W}(t,z)= G(z)[\tau(t)]\circ\tau(t)^{-1}. \label{A.5}\eeq
The $\mathcal{A}$-valued  $\tau$-function is determined up to a multiplication
by $C_0\circ e^{\sum_{k=1}^\infty C_k t_k} $ with arbitrary constant elements $C_k\in\mathcal{A}, k\in \mathbb{N}$ and
arbitrary invertible constant element $C_0\in\mathcal{A}$.
\end{thm}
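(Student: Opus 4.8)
The plan is to reduce the statement to a single functional equation and then integrate. Writing $F=\ln\tau$, the assertion \eqref{A.5} is equivalent to
\[ (G(z)-1)[F]=\ln\widehat{W}(t,z), \]
where the logarithm of the $\mathcal{A}$-valued series $\widehat{W}=\be+W_1z^{-1}+\cdots$ is defined by the usual formal series $\ln(\be+X)=\sum_{k\ge1}\frac{(-1)^{k-1}}{k}X^{\circ k}$; this is well defined as a formal power series in $z^{-1}$ with coefficients in $\mathcal{A}$ precisely because $\circ$ is commutative and associative. The payoff of this reformulation is that the final verification becomes immediate: once an $\mathcal{A}$-valued function $F$ solving the displayed equation is produced, I set $\tau=\exp(F)$, which is automatically invertible with $\tau^{-1}=\exp(-F)$, and then, using commutativity of $\circ$ to add exponents,
\[ G(z)[\tau]\circ\tau^{-1}=\exp\!\big(G(z)[F]-F\big)=\exp\!\big((G(z)-1)[F]\big)=\exp\big(\ln\widehat{W}\big)=\widehat{W}(t,z). \]
Thus the whole theorem rests on showing that $\ln\widehat{W}$ lies in the image of the operator $G(z)-1$.

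To construct $F$, I would expand both sides in powers of $z^{-1}$. Writing $\ln\widehat{W}(t,z)=\sum_{n\ge1}h_n(t)z^{-n}$ and recalling from \eqref{A.6} that $G(z)=\exp\!\big(-\sum_{k\ge1}\tfrac1{kz^k}\p_k\big)$, the coefficient of $z^{-n}$ in $(G(z)-1)[F]$ equals $-\tfrac1n\p_nF$ plus a differential polynomial in the lower derivatives $\p_jF$ with $j<n$. Matching coefficients therefore determines the derivatives $\p_nF=:P_n$ triangularly, each $P_n$ being a universal differential polynomial in $h_1,\dots,h_n$ and hence an explicit function of $t$. The first relation, read off from identity \eqref{A.9}, is $P_1=-W_1$, so \eqref{A.9} serves both as the base case and as a consistency check on the scheme. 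At this stage $F$ is not yet a function but only a prescribed gradient $(\p_nF)_{n\ge1}=(P_n)_{n\ge1}$.

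The heart of the proof, and the step I expect to be the main obstacle, is the integrability of this gradient: by the formal Poincaré lemma in the variables $t=(t_1,t_2,\dots)$, a function $F$ with $\p_nF=P_n$ exists if and only if $\p_mP_n=\p_nP_m$ for all $m,n$. The $t_r$-evolution of the coefficients of $\widehat{W}$ is governed by \eqref{A.2} (equivalently by the zero-curvature equations \eqref{Z2.4}), and the required symmetry of mixed partials is a consequence of the bilinear identity \eqref{A.7}, exactly as in the scalar KP case. The key structural point is that every identity entering this computation is a polynomial relation in the coefficient functions built only from $+$ and $\circ$, together with division by invertible elements — namely $\widehat{W}^{-1}$, furnished by \eqref{A.8}, and $\tau^{-1}=\exp(-F)$; since $\circ$ is commutative and associative, the scalar argument transfers coefficient by coefficient. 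This commutativity is exactly what is unavailable for the matrix KP hierarchy, and is the structural reason a $\tau$-function exists in the Frobenius-algebra setting but not in the noncommutative one.

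Finally, I would read off the indeterminacy of $\tau$. For a fixed $\widehat{W}$ the function $F$ is determined by its gradient up to a single additive constant, and $\exp$ turns that constant into multiplication by an arbitrary invertible $C_0\in\mathcal{A}$, which accounts for the factor $C_0$. The remaining freedom comes from the dressing ambiguity noted after \eqref{A.1}: replacing $\Phi$ by $\Phi\circ(\be+\sum_kC_k\p^{-k})$ alters $\widehat{W}(t,z)$ by a $t$-independent right factor with leading term $\be$, which, written as $\exp(-\sum_kC_k/(kz^k))$, corresponds by the exponential identity above to multiplying $\tau$ by $\exp(\sum_kC_kt_k)$. Combining the two sources yields precisely the stated ambiguity $C_0\circ\exp(\sum_{k}C_kt_k)$.
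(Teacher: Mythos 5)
Your overall skeleton is the right one and matches the paper's in spirit: both proofs reduce the theorem to exhibiting a potential $F=\ln\tau$ for a collection of prescribed $t$-derivatives and then invoke a formal Poincar\'e lemma, and your account of the residual ambiguity $C_0\circ e^{\sum_k C_kt_k}$ is correct (indeed more explicit than the paper's ``the rest is obvious''). However, there is a genuine gap at exactly the point you flag as ``the heart of the proof'': you assert that the closedness condition $\p_mP_n=\p_nP_m$ ``is a consequence of the bilinear identity \eqref{A.7}, exactly as in the scalar KP case,'' but you never derive it, and this derivation is the entire content of the paper's argument. The mechanism is not a routine transfer: one must apply the bilinear identity with a \emph{double} shift $G(\zeta_1)[G(\zeta_2)[\,\cdot\,]]$ to obtain the symmetry \eqref{A.14}, take logarithms to get the functional equation \eqref{A.15} for $\mu=\ln\widehat{W}$, and then apply the operator $N(z)=\p_z-\sum_k z^{-k-1}\p_k$ and take residues to produce closed-form quantities $B_i=\mathrm{res}_z\,z^iN(z)\mu$ satisfying $\p_jB_i-\p_iB_j=G(\zeta)[\p_jB_i-\p_iB_j]$, whence the difference is a constant that vanishes because it is a differential polynomial in the $W_i$ without constant term. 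None of these steps appears in your proposal, and the single-shift identities \eqref{A.8}--\eqref{A.9} that you do cite are not sufficient to yield the mixed-partial symmetry.

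A secondary, smaller issue: your triangular recursion defines $P_n$ in terms of iterated derivatives $\p_{k_1}\cdots\p_{k_j}F$ of the not-yet-constructed $F$, which are only unambiguous once the closedness you are trying to prove is already known; the paper sidesteps this circularity by giving the $B_i$ directly as residues of $N(z)\ln\widehat{W}$, i.e.\ as explicit differential polynomials in the coefficients $W_i(t)$, rather than recursively. Your observation that commutativity of $\circ$ is the structural reason the scalar argument survives (and fails for matrix KP) is correct and worth keeping, but it explains why the computation \emph{can} be transferred, not what the computation is.
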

\begin{proof}With the bilinear identity \eqref{A.7}, we get
\beq \mathrm{res}_z W(t,z)\circ G(\zeta_1)[G(\zeta_2)[\widetilde{W}(t,z)]]=0\nn\eeq
and
\beq \mathrm{res}_z \widehat{W}(t,z)\circ G(\zeta_1)[G(\zeta_2)[\widehat{\tilde{W}}(t,z)]]
(1-\frac{z}{\zeta_1})^{-1}(1-\frac{z}{\zeta_2})^{-1}=0.\label{A.13}\eeq
It follows from \eqref{A.13} that
\beq \widehat{W}(t,\zeta_1)\circ G(\zeta_2)[G(\zeta_1)[\widehat{\tilde{W}}(t,\zeta_1)]]
=\widehat{W}(t,\zeta_2)\circ G(\zeta_1)[G(\zeta_2)[\widehat{\tilde{W}}(t,\zeta_2)]], \nn \eeq
which becomes, using \eqref{A.8},
\beq \widehat{W}(t,\zeta_1)\circ G(\zeta_2)[\widehat{W}(t,\zeta_1)^{-1}]
=\widehat{W}(t,\zeta_2)\circ G(\zeta_1)[\widehat{W}(t,\zeta_2)^{-1}]. \label{A.14}\eeq
Letting $\mu(t,z)=\ln \widehat{W}(t,z)$ and
taking into account \eqref{A.14},  we have
\beq \mu(t,\zeta_1)-G(\zeta_2)[\mu(t,\zeta_1)]=\mu(t,\zeta_2)-G(\zeta_1)[\mu(t,\zeta_2)].\label{A.15}\eeq
For simplicity, we denote
\beq N(z):=\frac{\p}{\p z}-\sum_{k=1}^\infty z^{-k-1}\p_k,\quad B_i:=\mathrm{res}_z z^iN(z)\mu(t,z).\nn\eeq
Applying the operator to \eqref{A.15} after renaming $\zeta_1=z$ and $\zeta_2=\zeta$, we get
\beq N(z) \mu(t,z)-G(\zeta)[N(z)\mu(t,z)]=-\sum_{k=1}^\infty z^{-k-1}\p_k\mu(t,\zeta).\label{A.16}\eeq
Multiplying by $z^i$ on both sides of \eqref{A.16} and taking the residues $\mathrm{res}_z$, we obtain
\beq B_i=G(\zeta)[B_i]-\p_i\mu(t,z) \label{A.17} \eeq
and furthermore,
\beq \p_j B_i-\p_i B_j=G(\zeta)[\p_j B_i-\p_i B_j].\label{A.18} \eeq
which yields $\p_j B_i-\p_i B_j=const\in\mathcal{A}$. The left side of \eqref{A.18}
is a differential polynomial in $W_i(t)$ without constant terms, we thus have
$\p_j B_i=\p_i B_j$. So there is an $\mathcal{A}$-valued function $\tau=\tau(t)$
such that $B_i=\p_i \ln \tau$. By using \eqref{A.17}, we get
\beq \p_i \mu(t,z)=\p_i(G(\zeta)[\ln \tau]-\ln \tau)\nn\eeq
which yields \eqref{A.5}. The rest of the theorem is obvious.  \end{proof}

\begin{cor} For any $i\in\mathbb{N}$, the following identity holds:
\beq \res L^i=\frac{\p}{\p t_i} (\tau_x\circ\tau^{-1}).\label{A.19}\eeq \end{cor}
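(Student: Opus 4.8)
The plan is to relate the residue $\res L^i$ directly to the $\tau$-function via the
structures already established, mirroring the classical scalar argument.
First I would recall from \eqref{A.11} that $\p_i W(t,z)=B_i\circ W(t,z)$, where
$B_i=L^i_+$, and I would examine the coefficient of $z^{-1}$ in a suitable
logarithmic-derivative identity. Concretely, starting from
$W(t,z)=\widehat{W}(t,z)e^{\xi(t,z)}$ together with Sato's formula
\eqref{A.5}, namely $\widehat{W}(t,z)=G(z)[\tau]\circ\tau^{-1}$, I would
take $\p_i\ln\widehat{W}(t,z)$ and extract the term that produces $\res L^i$.

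The key steps, in order, are as follows. First I would establish the standard
Lax-pair consequence that $\res L^i$ equals the residue (coefficient of
$\p^{-1}$) of the $i$-th flow acting on the dressing data; since $L=\Phi^{-1}\circ\be\p\circ\Phi$
and $B_i=L^i_+$, one has $\res L^i = \p_i W_1 + (\text{terms that telescope})$,
and more precisely I expect $\res L^i = \p_x(\text{coefficient in }\p_i\ln\widehat W)$.
Second, I would invoke \eqref{A.9}, which gives
$\p\ln\widehat W(t,z)=W_1(t)-G(z)[W_1(t)]$, and combine it with the $\tau$-function
representation $W_1=\p_x\ln\tau=\tau_x\circ\tau^{-1}$ (read off from
$\widehat W=G(z)[\tau]\circ\tau^{-1}$ by taking the coefficient of $z^{-1}$ in
$\ln\widehat W$, since $G(z)$ shifts $t_1\mapsto t_1-z^{-1}$). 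Third, I would
differentiate with respect to $t_i$ and match the $z^{-1}$-coefficients: the
relation $B_i=\p_i\ln\tau$ from the proof of the Theorem, specialized and
differentiated in $x$, should yield $\res L^i=\p_x\p_i\ln\tau=\p_i(\tau_x\circ\tau^{-1})$,
which is exactly \eqref{A.19}.

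The main obstacle will be bookkeeping the noncommutativity-free but still
algebra-valued logarithms and inverses: because $\mathcal{A}$ is commutative and
associative, $\ln\tau$, $\tau^{-1}$, and $\tau_x\circ\tau^{-1}=\p_x\ln\tau$ all
behave formally as in the scalar case, so the only genuine care needed is to
verify that the residue extraction commutes with the $\mathcal{A}$-valued
coefficients and that the shift operator $G(z)$ interacts correctly with
$\p_i$ (they commute, since $G(z)$ is a constant shift in the times). I would
therefore reduce the computation to matching Taylor coefficients in $z^{-1}$ of
$\p_i\ln\widehat W(t,z)$, using that the coefficient of $z^{-1}$ in
$\ln\widehat W$ is $\p_x^{-1}\res L$ in the scalar dictionary; the identity
$B_i=\p_i\ln\tau$ then closes the argument. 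Everything else follows the scalar
KP proof verbatim, the Frobenius structure entering only through the
commutative product $\circ$ and the well-definedness of inverses of invertible
elements of $\mathcal{A}$.
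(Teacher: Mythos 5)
Your proposal follows essentially the same route as the paper's proof, which consists of exactly the two ingredients you identify: (i) taking the residue (coefficient of $\p^{-1}$) of the dressing equation $\p_i\Phi=-L^i_-\circ\Phi$ to relate $\res L^i$ to $\p_i W_1$, and (ii) combining the identity $\p\ln\widehat{W}(t,z)=W_1-G(z)[W_1]$ with the relation $B_i=\p_i\ln\tau$ from the proof of the Theorem. However, two of your intermediate claims carry incorrect signs which happen to cancel, and as written they would not survive a careful check. First, equating the $\p^{-1}$-coefficients in $\p_i\Phi=-L^i_-\circ\Phi$ gives $\res L^i=-\p_i W_1$ exactly (this is the paper's equation \eqref{A.20}); there are no additional telescoping terms and the sign is minus, not plus. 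Second, comparing the $z^{-1}$-coefficients of $\ln\widehat{W}=G(z)[\ln\tau]-\ln\tau$, and using $\widehat{W}=\be+W_1z^{-1}+\cdots$ together with the fact that $G(z)$ shifts $t_1\mapsto t_1-z^{-1}$, yields $W_1=-\p_x\ln\tau=-\tau_x\circ\tau^{-1}$, not $+\tau_x\circ\tau^{-1}$. With both signs corrected the chain closes as $\res L^i=-\p_i W_1=\p_i\p_x\ln\tau=\p_i(\tau_x\circ\tau^{-1})$, which is \eqref{A.19}. Your observation that commutativity and associativity of $\mathcal{A}$ make $\ln\tau$, $\tau^{-1}$ and $\p_x\ln\tau=\tau_x\circ\tau^{-1}$ behave as in the scalar case is correct and is implicitly used by the paper as well; the paper's own computation avoids the explicit formula for $W_1$ in terms of $\tau$ by instead evaluating $\mathrm{res}_z\, z^iN(z)$ applied to \eqref{A.9}, but this is only a cosmetic difference from your argument.
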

\begin{proof} Equating the residue on both sides of \eqref{A.2}, we have
\beq  \res L^i=-\p_i W_1(t).\label{A.20}\eeq
Observe that $\mathrm{res}_z z^iN(z)\mu(t,z)=B_i=\p_i \ln \tau $ and $\mu(t,z)=\ln\widehat{W}(t,z)$,
then we get
\eqa \frac{\p}{\p t_i} (\tau_x\circ\tau^{-1})&=&\p\p_i\ln\tau=\mathrm{res}_z z^iN(z)\p \ln\widehat{W}(t,z)\qquad
\mbox{using \eqref{A.9}}\nn\\
&=& \mathrm{res}_z z^iN(z)\big(W_1(t)-G(z)[W_1(t)]\big)=\mathrm{res}_z z^iN(z)W_1(t)\nn\\
&=& \mathrm{res}_z z^i \dsum_{k=1}^\infty z^{-k-1}\p_kW_1(t)=-\p_iW_1(t).\nn
\eeqa
Taking into account \eqref{A.20}, we obtain the desired formula \eqref{A.19}.
\end{proof}

\begin{ex}
Let $A\in\mathcal{A}$ be a constant element, then $$\tau={\bf{1}}_n+\exp(2Ax+2A^3t)$$ is an
$\mathcal{A}$-valued $\tau$-function of the  $\mathcal{A}$-valued KdV equation
$ 4\mathcal{U}_t-12\mathcal{U}\circ \mathcal{U}_x-\mathcal{U}_{xxx}=0$.

Taking $\mathcal{A}$ to be the Frobenius algebra $\mathcal{Z}_{2}$,  the $\mathcal{A}$-valued KdV equation is exactly
the coupled \mbox{KdV} equations \eqref{CKDV}.
By choosing
$A=\left(\begin{array}{cc}
a&0\\
b&a
\end{array}\right)\in \mathcal{A}$, we then have
$$\tau=\left(\begin{array}{cc}
1+\exp(2ax+2a^3t)&0\\
(2bx+2b^3t)\exp(2ax+2a^3t)& 1+\exp(2ax+2a^3t)
\end{array}\right):=\left(\begin{array}{cc}
\tau_0&0\\
\tau_1&\tau_0
\end{array}\right)$$
and
\beq \left(\begin{array}{cc}
v&0\\
w&v
\end{array}\right)=\mathcal{U}=\frac{\p}{\p x}(\tau_x\tau^{-1})= \left(\begin{array}{cc}
(\log \tau_0)_{xx}&0\\
\left(\dfrac{\tau_1}{\tau_0}\right)_{xx}&(\log \tau_0)_{xx}
\end{array}\right).\nn\eeq
Thus we obtain a solution of the couple KdV equation \eqref{CKDV} given by
\beq  v=(\log \tau_0)_{xx},\quad w=\left(\dfrac{\tau_1}{\tau_0}\right)_{xx}. \label{ZZA1} \eeq
We remark that the variable transformation \eqref{ZZA1} has been used to
derive the coupled KdV equation from the Hirota equation in \cite{CO2006}.
The form of this (i.e. equation \eqref{ZZA1}) may thus be traced back to the nilpotent elements the appear in the
Frobenius algebra $\mathcal{A}.$
\end{ex}

\section{Hamiltonian structures of the $\mathcal{A}$-valued KP hierarchy}

In this section, we will use the AGD-scheme (e.g.\cite{A1979,GD1978,Dick2003}) to construct
Hamiltonian structures of the $\mathcal{A}$-\mbox{KP} hierarchy. For the clarity,
let $P=\dsum_i P_i\p^i$ be an $\mathcal{A}$-valued
\mbox{$\Psi$DO}, in what follows we donete $P_+$ the pure differential part of the operator $P$ and
$$P_{-}=P-P_+, \quad \mbox{res}(P)=P_{-1},\quad P^*=\dsum_i (-1)^i \p^i P_i.$$

\begin{lem}\label{lem2.4}
Suppose $A$ and $B$ are two $\mathcal{A}$-valued \mbox{$\Psi$DO}s, then
\beq \trm \int \res A\circ B\,dx=\trm\int \res B\circ A \,dx.\label{eq3.1}\eeq\end{lem}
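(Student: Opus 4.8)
The plan is to reduce the Frobenius algebra-valued identity to the classical scalar residue theorem, namely the fact that $\res[P,Q]$ is a total $x$-derivative for any two ordinary scalar $\Psi$DOs $P,Q$ (see e.g. \cite{Dick2003}). The only two extra ingredients needed beyond the scalar case are the commutativity of the product $\circ$ and the linearity of the trace form $\trm=\omega$.

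First I would decompose everything in the fixed basis $e_1=\be,\dots,e_n$ of $\mathcal{A}$. Since these basis elements are constant---independent of all the flow variables $t_k$, and in particular of $x$---they commute with $\p$. Hence any $\mathcal{A}$-valued $\Psi$DO splits as $A=\dsum_\alpha A^\alpha e_\alpha$ with ordinary scalar-coefficient $\Psi$DOs $A^\alpha$, and likewise $B=\dsum_\beta B^\beta e_\beta$. Because the $e_\alpha$ pass through every derivative in the generalized Leibniz rule, and $\circ$ is commutative and associative, the $\Psi$DO product factorizes as $A\circ B=\dsum_{\alpha,\beta}(A^\alpha B^\beta)\,(e_\alpha\circ e_\beta)$, where $A^\alpha B^\beta$ denotes the usual scalar product. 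Taking residues and using $e_\alpha\circ e_\beta=e_\beta\circ e_\alpha$ then gives
\beq \res(A\circ B)-\res(B\circ A)=\dsum_{\alpha,\beta}\res[A^\alpha,B^\beta]\,(e_\alpha\circ e_\beta). \nn \eeq

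Next I would invoke the scalar theorem to write each $\res[A^\alpha,B^\beta]=\p_x F^{\alpha\beta}$ for some differential polynomial $F^{\alpha\beta}$. Setting $F=\dsum_{\alpha,\beta}F^{\alpha\beta}(e_\alpha\circ e_\beta)\in\mathcal{A}$, the displayed difference becomes $\p_x F$. Finally, since $\trm$ is a fixed ($x$-independent) linear functional, $\trm(\p_x F)=\p_x\,\trm(F)$, so integrating this total derivative over $x$ (under the usual decay or periodicity assumption on the coefficients) yields zero, which is exactly $\trm\int\res(A\circ B)\,dx=\trm\int\res(B\circ A)\,dx$.

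The main point to get right is the factorization step together with confirming that the scalar residue theorem transfers verbatim: the decomposition works precisely because the $e_\alpha$ are $\p$-constant, so the Leibniz expansion in the $\Psi$DO product acts only on the scalar coefficients while the algebra product of the constant basis elements passes through untouched. Commutativity of $\circ$ is what collapses the $\alpha\leftrightarrow\beta$ asymmetry between $A\circ B$ and $B\circ A$ into the single scalar commutator $[A^\alpha,B^\beta]$; once this is in place, the remainder is a direct transcription of the scalar argument and presents no further difficulty.
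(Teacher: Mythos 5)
Your proof is correct. It rests on the same underlying fact as the paper's argument --- that $\res(A\circ B)-\res(B\circ A)=\res[A,B]$ is a total $x$-derivative of an $\mathcal{A}$-valued quantity, after which applying the constant linear functional $\trm$ and integrating kills it --- but you reach that fact by a different decomposition. The paper reduces by linearity to $\mathcal{A}$-valued monomials $A=A_i\p^i$, $B=B_j\p^j$ and runs the Leibniz computation directly with algebra-valued coefficients, exhibiting an explicit antiderivative $h=C_i^{i+j+1}\sum_{s}(-1)^sA_i^{(s)}\circ B_j^{(i+j-s)}$; commutativity of $\circ$ is used inside that computation to pair up the two boundary terms. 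You instead expand over the constant basis $e_\alpha$, use that the $e_\alpha$ pass through all derivatives to factor $A\circ B=\sum_{\alpha,\beta}(A^\alpha B^\beta)(e_\alpha\circ e_\beta)$, and then quote the classical scalar residue theorem as a black box, with commutativity of $\circ$ entering only to match the $(\alpha,\beta)$ and $(\beta,\alpha)$ terms. Your route is more modular (no computation needs to be redone, and it makes transparent exactly which two properties of $\mathcal{A}$ are used: constancy of the basis and commutativity of $\circ$), while the paper's direct computation has the advantage of producing the explicit $\mathcal{A}$-valued antiderivative $h$, which is occasionally useful elsewhere. Both are complete; your appeal to decay or periodicity of the coefficients to discard the exact term is the same implicit assumption the paper makes.
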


\begin{proof}We first show that
\beq \res[A,B]= \displaystyle{\frac{\p h(x,{\bf t})}{\p x}},\label{eq3.2}\eeq
where $h(x,{\bf t})$ is a certain $\mathcal{A}$-valued function. By linearity, it is sufficient to
prove \eqref{eq3.1} for any two $\mathcal{A}$-valued
monomials $A=A_i\p^i$, $B=B_j\p^j$. If $i,j\geq 0$ or $i+j<1$, then
$\res[A,B]=0$ and so $h=0$. We thus only need consider the case $i\geq 0$, $j<0$ and $i+j\geq 1$.
A direct computation gives
\eqa \res[A,B]&=&C_i^{i+j+1}\left(A_i\circ B_j^{(i+j+1)}+(-1)^{i+j}B_j\circ A_i^{(i+j+1)}\right)\nn\\
&=&
\frac{\p}{\p x} \left(C_i^{i+j+1}\dsum_{s=0}^{i+j}(-1)^sA_i^{(s)}\circ B_j^{i+j-s}\right):=\frac{\p}{\p x}h.\nn
\eeqa
Obviously $h$ is $\mathcal{A}$-valued. Furthermore, taking the trace form $\trm$ on both sides of \eqref{eq3.2}, we obtain
$\trm \res[A,B]=\trm \frac{\p h}{\p x}$. With this, the identity \eqref{eq3.1} follows immediately.
\end{proof}

\subsection{Case $U_0\ne 0$, i.e., $V_{m-1}\ne 0$} Let $L=\be\p+U_0+U_1\p^{-1}+U_2\p^{-2}+\cdots$
be an $\mathcal{A}$-valued $\Psi$DO with an additional term $U_0$.
Denoting
\beq  \mathcal{L}:=L^m=\be \p^m+V_{m-1}\p^{m-1}+V_{m-2}\p^{m-2}+\cdots,
\quad V_{i}=\dsum_{q=1}^{n} v_{[i]q}e_q.\label{Z2.15}\eeq
In the following our Hamiltonian structures will be established in terms
of the ``dynamical coordinates" $\{v_{[i]q}\}$.

We denote by $ {\mathfrak{D}}$ the differential algebra of polynomials in formal symbols
$\left\{v_{[i]q}^{(j)}\right\}$, where $v_{[i]q}^{(j)}=\dfrac{\p^jv_{[i]q}}{\p x^j}$ for
$q=1,\cdots,n$ and $j=0,1,\cdots$.  We consider a subalgebra $\mathcal{D}$ of ${\mathfrak{D}}$ with the element of the form $\trm F(V)$, where $F(V)$ is an $\mathcal{A}$-valued
differential polynomial w.r.t. its arguments $V_i$.
 We denote the space of functionals by
$$\widetilde{\mathcal{D}}=\left\{\left.\tilde{f}=\int \trm F(V) dx\right|\,\trm F(V)\in {\mathcal{D}} \right\}.$$

The variational derivative with respect to an algebra-valued field has been discussed in \cite{OS} .
In the present context, for $V=\dsum_{q=1}^{n}v_qe_q$, the variational derivative $\dfrac{\delta F }{\delta V}$  is defined by
\beq \tilde{f}(v+\delta v)-\tilde{f}(v)=\int \trm \left(\dfrac{\delta F }{\delta V}\circ\delta V+o(\delta V)\right)dx
=\int\dsum_{q=1}^{n}\left(\dfrac{\delta f }{\delta v_q}\,\delta v_q+o(\delta v)\right)dx,\label{Z2.20} \eeq
where $f(v)= \trm F(V)$, $\delta V=\dsum_{q=1}^{n}\delta v_q e_q \in \mathcal{A}$ and $
\dfrac{\delta f}{\delta v_{q}}=\dsum_{j=0}^{\infty}(-\p)^j\dfrac{\p f}
{\p v_{q}^{(j)}}$. Without confusion, we use the notation
$\dfrac{\delta f }{\delta V}$ instead of $\dfrac{\delta F }{\delta V}$.

Suppose ${\bf a}=(a_{m-1},a_{m-2},\cdots)$ with elements
\beq a_i=\dsum_{q=1}^{n} a_{[i]q}e_q \in \mathcal{A},\quad i=m-1,m-2,\cdots\,.\nn\eeq
We define a vector field associated to ${\bf a}$ by the formula
\beq \p_{\bf a}=\dsum_{i=-\infty}^{m-1}\dsum_{j=0}^{\infty} \dsum_{q=1}^{n}
a_{[i]q}^{(j)}\dfrac{\p}{\p v_{[i]q}^{(j)}}.\label{Z2.23}\eeq
Obviously, $\p_{\bf a}$ and $\p$ commute, i.e.,
\beq \p\p_{\bf a} f=\p_{\bf a}\p f,\quad \mbox{for}\quad f\in\mathcal{D}. \label{Z2.25}\eeq
The set of all vector fields $\p_{\bf a}$ will be denoted by $\mathcal{V}$, which is a Lie algebra
with respect to  the commutator $[\p_{\bf a},\p_{\bf b}]=\p_{\p_{\bf a}{\bf b}-\p_{\bf b}{\bf a}}$.
Let $\Omega^1$ be the dual space of $\mathcal{V}$ consisting of formal $\mathcal{A}$-valued
integral operators
$$X=\dsum_{i=-\infty}^{m-1} \p^{-i-1}X_i,\quad X_i\in\mathcal{A}$$
with the pairing
\beq \la \p_{\bf a}, X\ra=\la {\bf a}, X\ra=\trm\int \res({\bf a}\circ X) dx.\label{Z4.9}\eeq

With the use of the formulae \eqref{Z2.20} and \eqref{Z2.25},
the action of $\mathcal{V}$ on $\mathcal{D}$ can be transferred to $\widetilde{\mathcal{D}}$:
\eqa \p_{\bf a}\tilde{f}=\p_{\bf a}\int f dx=\int \p_{\bf a}f dx
=\dsum_{i=-\infty}^{m-1}\dsum_{q=1}^{n} \int \dfrac{\delta f }{\delta v_{[i]q}}a_{[i]q}dx
=\trm \int \dsum_{i=-\infty}^{m-1} a_i\circ\dfrac{\delta f}{\delta V_i}\, dx.\nn\eeqa
If we set
\beq \dfrac{\delta f}{\delta \mathcal{L}}=\dsum_{i=-\infty}^{m-1} \p^{-i-1}\frac{\delta f}{\delta V_i}
\label{Z2.26}\eeq
and identify the vector ${\bf a}=(a_{n-1},a_{n-2},\cdots)$ with the $\mathcal{A}$-valued $\Psi$DO
${\bf a}=\dsum_{i=-\infty}^{m-1}a_i\p^i,$
we then have
\beq \p_{\bf a}\tilde{f}=\trm\int \res \, ({\bf a} \circ\dfrac{\delta f}{\delta \mathcal{L}}) \, dx,\label{Z4.7} \eeq
which follows
\beq \la \p_{\bf a}, \frac{\delta f}{\delta \mathcal{L}}\ra=\p_{\bf a} \tilde{f}=\la \p_{\bf a},
 d\tilde{f}\ra,\quad d\tilde{f}=\frac{\delta f}{\delta \mathcal{L}} \in {\Omega}^1.\label{Z4.10} \eeq

 \begin{lem}\label{lem3.4} The  mapping
 $\mathcal{H}:\Omega^1 \rightarrow \mathcal{V}$  defined by  $\mathcal{H}(X)=\p_{A^{(z)}(X)}$ is a Hamiltonian
 mapping\footnote{A skew mapping $\mathcal{H}:\Omega^1 \rightarrow \mathcal{V}$
 is said to be Hamiltonian if
(1). $\mathcal{H}\Omega^1 \subset\mathcal{V}$ is a Lie subalgebra;
(2). the 2-form $\omega$  defined by
$ \omega(\mathcal{H}(X), \mathcal{H}(Y))=\la \mathcal{H}(X), Y\ra$
is closed.}, where
 \beq
 A^{(z)}(X)=({\widetilde{\mathcal{L}}}\circ X)_+\circ {\widetilde{\mathcal{L}}}-{\widetilde{\mathcal{L}}}\circ (X\circ {\widetilde{\mathcal{L}}})_+
 \label{eq3.7} \eeq
and $\widetilde{\mathcal{L}}=\mathcal{L}-z$ and $z$ is an arbitrary parameter.
\end{lem}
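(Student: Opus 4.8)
The plan is to verify, in order, the four things packaged in the statement: that $A^{(z)}(X)$ genuinely defines an element of $\mathcal{V}$, so that $\mathcal{H}$ maps $\Omega^1$ into $\mathcal{V}$; that $\mathcal{H}$ is skew for the pairing \eqref{Z4.9}; and then the two conditions in the footnote, namely that $\mathcal{H}\Omega^1$ is a Lie subalgebra of $\mathcal{V}$ and that the associated $2$-form is closed. The guiding principle is that the only property of the pairing used throughout the classical Adler--Gelfand--Dickey argument is the cyclicity of $\trm\int\res(\cdot)\,dx$, and this is precisely Lemma \ref{lem2.4}; since $\circ$ is commutative and $\trm=\omega$ is a nondegenerate symmetric trace, every manipulation of the scalar theory transcribes verbatim under the dictionary ``product $\mapsto\circ$, $\int\res\mapsto\trm\int\res$''. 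I would also record at the outset that, expanding $\widetilde{\mathcal{L}}=\mathcal{L}-z$ in \eqref{eq3.7}, the $z^2$-terms cancel and $A^{(z)}=A_2-zA_1$ is affine in $z$; verifying the claim for all $z$ therefore yields a compatible pair and the bi-Hamiltonian structure announced in the introduction.

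For well-definedness I would compare the differential parts of the two terms in \eqref{eq3.7}: writing $\widetilde{\mathcal{L}}=\be\p^m+\cdots$ and using that $X$ is purely integral, a short computation with leading symbols shows that the parts of order $\geq m$ of $(\widetilde{\mathcal{L}}\circ X)_+\circ\widetilde{\mathcal{L}}$ and of $\widetilde{\mathcal{L}}\circ(X\circ\widetilde{\mathcal{L}})_+$ coincide and cancel, so that $A^{(z)}(X)=\sum_{i\leq m-1}a_i\p^i$ has order $\leq m-1$ and, via the identification made after \eqref{Z2.26}, determines $\p_{A^{(z)}(X)}\in\mathcal{V}$. For skewness I would expand $\la A^{(z)}(X),Y\ra=\trm\int\res\big(A^{(z)}(X)\circ Y\big)\,dx$ and transfer the projections across the pairing by the adjunction $\trm\int\res(P_+\circ Q)\,dx=\trm\int\res(P\circ Q_-)\,dx$, which follows from $\trm\int\res(P_+\circ Q_+)\,dx=0=\trm\int\res(P_-\circ Q_-)\,dx$; repeated use of this together with the cyclicity of Lemma \ref{lem2.4} rewrites $\la A^{(z)}(X),Y\ra$ as a form manifestly antisymmetric under $X\leftrightarrow Y$, giving $\la A^{(z)}(X),Y\ra=-\la A^{(z)}(Y),X\ra$.

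The substance of the proof lies in the two Hamiltonian conditions. The key input is that $\p_{A^{(z)}(Y)}$ acts on the Lax operator by $\p_{A^{(z)}(Y)}\widetilde{\mathcal{L}}=A^{(z)}(Y)$ (as $z$ is constant and $\p_{\bf a}$ shifts the coordinates $v_{[i]q}$ by $a_{[i]q}$), so the Lie derivative of the coefficient $A^{(z)}(X)$ along this flow is computed by differentiating \eqref{eq3.7}, using that $\p_{A^{(z)}(Y)}$ is a derivation commuting with $\p$ by \eqref{Z2.25}, and substituting. Assembling $\p_{A^{(z)}(X)}A^{(z)}(Y)-\p_{A^{(z)}(Y)}A^{(z)}(X)$ and reorganising the projected products by the adjunction and the splitting rules for $(\cdot)_\pm$ of a product, I expect this difference to take the form $A^{(z)}(Z)$ for a one-form $Z=Z(X,Y)\in\Omega^1$; the commutator rule $[\p_{\bf a},\p_{\bf b}]=\p_{\p_{\bf a}{\bf b}-\p_{\bf b}{\bf a}}$ then yields $[\mathcal{H}(X),\mathcal{H}(Y)]=\mathcal{H}(Z)$, which is condition $(1)$. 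For condition $(2)$ I would evaluate $d\omega$ on $\big(\mathcal{H}(X),\mathcal{H}(Y),\mathcal{H}(Z)\big)$ by the standard formula for the exterior derivative of a $2$-form, taking $X,Y,Z$ frozen, so that the directional-derivative terms are the Lie-derivative expressions just found and the commutator terms are governed by the bracket $Z(\cdot,\cdot)$; the resulting cyclic sum must vanish. This cancellation is the main obstacle: it is the one place where the full cyclicity of Lemma \ref{lem2.4} is indispensable, and the essential check is that the non-commutativity of $\Psi$DO composition---present even though $\mathcal{A}$ itself is commutative---enters only through total $x$-derivatives and the projection adjunctions, reducing the cyclic sum to a scalar-type expression that vanishes by symmetry exactly as in the classical case \cite{Dick2003}.
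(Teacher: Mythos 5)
Your proposal is correct and takes essentially the same route as the paper, whose entire proof is the remark that the Frobenius trace form is symmetric and nondegenerate so that Dickey's argument for the scalar Adler map carries over verbatim to $\mathcal{A}$-valued operators, with Lemma \ref{lem2.4} supplying the needed trace identity; your text simply makes those steps explicit. The only quibble is that elements of $\Omega^1$ need not be purely integral (for $i<-1$ the term $\p^{-i-1}X_i$ is differential), but your order bound on $A^{(z)}(X)$ survives via the rewriting $A^{(z)}(X)=-(\widetilde{\mathcal{L}}\circ X)_{-}\circ\widetilde{\mathcal{L}}+\widetilde{\mathcal{L}}\circ(X\circ\widetilde{\mathcal{L}})_{-}$.
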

\begin{proof}When the Frobenius algebra $\mathcal{A}$ is taken to be $\mathbb{R}$, this mapping is the
famous Adler mapping which is a Hamiltonian mapping.  For a general commutative Frobenius algebra, its trace
form is nondengerate and symmetric. We thus follow the same ideas as used in \cite{D1987} to obtain the proof by replacing the scalar
operators by $\mathcal{A}$-valued operators. \end{proof}

We rewrite $A^{(z)}(X)$ in \eqref{eq3.7} as
$$A^{(z)}(X)=H^{m(0)}(X)+z\,H^{m(\infty)}(X),$$
that is to say, \beq H^{m(0)}(X)=({\mathcal{L}}\circ X)_+\circ {\mathcal{L}}-{\mathcal{L}}\circ (X\circ {\mathcal{L}})_+,
\quad H^{m(\infty)}(X)=[{\mathcal{L}}_{-},X_+]_{-}-[\mathcal{L}_{+},X_{-}]_{+}.\label{Z2.17}\eeq
By using Lemma \ref{lem3.4}, $H^{m(0)}$ and $H^{m(\infty)}$ are Hamiltonian mappings. We thus get
two compatible Poisson brackets of the $\mathcal{A}$-KP
 hierarchy associated with ${\mathcal{L}}:=L^m$ are given by
\eqa \left\{\tilde{f},\tilde{g}\right\}^{m(\infty)}
&=&\trm \int\res H^{m(\infty)}\left(\frac{\delta f}{\delta \mathcal{L}}\right)\circ\frac{\delta g}{\delta \mathcal{L}}\,dx
\label{Z2.18}\\
&=&\trm \int\res \left( \left[{\mathcal{L}}_{-},\left(\frac{\delta f}{\delta \mathcal{L}}\right)_+\right]_{-}
-\left[\mathcal{L}_{+},\left(\frac{\delta f}{\delta \mathcal{L}}\right)_{-}\right]_{+}\right)\circ\frac{\delta g}{\delta \mathcal{L}}\,dx
\nn\eeqa
and
\eqa\left\{\tilde{f},\tilde{g}\right\}^{m(0)}
&=&\trm \int\res H^{m(0)}\left(\frac{\delta f}{\delta \mathcal{L}}\right)\circ
\frac{\delta g}{\delta \mathcal{L}}\,dx \label{Z2.19} \\
&=&\trm \int\res \left(\left({\mathcal{L}}\circ\frac{\delta f}{\delta \mathcal{L}}\right)_+\circ {\mathcal{L}}
-{\mathcal{L}}\circ\left(\frac{\delta f}{\delta \mathcal{L}}\circ{\mathcal{L}}\right)_+\right)\circ
\frac{\delta g}{\delta \mathcal{L}}\,dx,\nn\eeqa
where $\tilde{f}$, $\tilde{g}$ are two functionals. Furthermore, we have

\begin{thm}\label{thm3.4}The $\mathcal{A}$-KP hierarchy $\dfrac{\p L}{\p t_r}=[B_r,L]$ admits a bi-Hamiltonian representation given by
\beq \dfrac{\p \mathcal{L}}{\p t_r}=H^{m(0)}\left(\frac{\delta h_r}{\delta \mathcal{L}}\right)=H^{m(\infty)}\left(\frac{\delta g_r}{\delta \mathcal{L}}\right) \eeq
with the Hamiltonians
 $$  \tilde{h}_r=\frac{m}{r}\trm\int \res L^{r}\, dx\,\quad\mbox{and}\quad
 \tilde{g}_r=-\frac{m}{r+m} \trm \int \res L^{m+r} \, dx. $$
\end{thm}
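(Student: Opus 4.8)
The plan is to reduce the two claimed Hamiltonian identities to a single gradient computation together with two short $\Psi$DO identities, the compatibility of the pencil being already furnished by Lemma~\ref{lem3.4}. First I would lift the $\mathcal{A}$-KP flow \eqref{SZ2.3} to the variable $\mathcal{L}=L^m$: since $[B_r,\cdot\,]$ acts as a derivation on $\mathcal{A}$-valued $\Psi$DOs, $\p_{t_r}\mathcal{L}=[B_r,L^m]=[B_r,\mathcal{L}]$, so it suffices to realise the single commutator $[B_r,\mathcal{L}]$ both as $H^{m(0)}(\delta\tilde h_r/\delta\mathcal{L})$ and as $H^{m(\infty)}(\delta\tilde g_r/\delta\mathcal{L})$.

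The crucial step is to evaluate the two gradients. Using the cyclicity of $\trm\int\res$ from Lemma~\ref{lem2.4}, I would first establish the variational identity $\trm\int\res\,\delta(\mathcal{L}^{k})\,dx=k\,\trm\int\res(\mathcal{L}^{k-1}\circ\delta\mathcal{L})\,dx$, valid for rational $k$ once the unique $m$-th root $\mathcal{L}^{1/m}=L$ is used. Converting $\delta L$ into $\delta\mathcal{L}$ through $\delta\mathcal{L}=\dsum_{a=0}^{m-1}L^{a}\circ\delta L\circ L^{m-1-a}$ and invoking cyclicity together with $[L^{s},L]=0$ gives, for any $s$, that $\tfrac1m L^{s-m}$ represents $\frac{\delta}{\delta\mathcal{L}}\big(\tfrac1s\trm\int\res L^{s}\,dx\big)$. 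With the normalisations in the statement this yields $\frac{\delta\tilde h_r}{\delta\mathcal{L}}=L^{r-m}$ and $\frac{\delta\tilde g_r}{\delta\mathcal{L}}=-L^{r}$, each determined only modulo the annihilator of the pairing \eqref{Z4.9}, i.e. modulo $\Psi$DOs of order $\le -m-1$.

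Next I would substitute these representatives into \eqref{Z2.17}. For the quadratic structure, $\mathcal{L}\circ L^{r-m}=L^{r}=L^{r-m}\circ\mathcal{L}$, so $H^{m(0)}(L^{r-m})=(L^{r})_+\circ\mathcal{L}-\mathcal{L}\circ(L^{r})_+=[B_r,\mathcal{L}]$. For the linear structure I would use the equivalent form $H^{m(\infty)}(X)=[\mathcal{L},X_+]-([\mathcal{L},X])_+$, obtained by expanding \eqref{eq3.7} in powers of $z$; with $X=-L^{r}$ and $[\mathcal{L},L^{r}]=0$ this collapses to $H^{m(\infty)}(-L^{r})=[\mathcal{L},-B_r]=-[\mathcal{L},B_r]=[B_r,\mathcal{L}]$. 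Both outputs equal $\p_{t_r}\mathcal{L}$, which is the asserted chain of equalities. Finally I would check that the representative ambiguity is harmless: if $Y$ has order $\le -m-1$ then $\mathcal{L}\circ Y$, $Y\circ\mathcal{L}$ and $[\mathcal{L},Y]$ all have negative order, whence $(\cdot)_+$ annihilates them and $H^{m(0)}(Y)=H^{m(\infty)}(Y)=0$; so using the full operators $L^{r-m}$ and $-L^{r}$ in place of their $\Omega^1$-representatives is legitimate.

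The step I expect to be the main obstacle is the gradient computation of the second paragraph: making the variation of the fractional power $\mathcal{L}^{k}$ rigorous and carrying out the passage from $\delta L$ to $\delta\mathcal{L}$ carefully in the $\mathcal{A}$-valued setting. This is precisely where the Frobenius structure enters, since the symmetry and nondegeneracy of $\trm$ are exactly what make the cyclicity in Lemma~\ref{lem2.4} hold; once that variational calculus is in place, the remaining manipulations are formal identities of $\Psi$DOs that run as in the scalar Adler--Gelfand--Dickey theory.
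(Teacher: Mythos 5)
Your proposal is correct and follows essentially the same route as the paper: lift the flow to $\mathcal{L}=L^m$, identify $\frac{\delta \tilde h_r}{\delta\mathcal{L}}=L^{r-m}$ and $\frac{\delta \tilde g_r}{\delta\mathcal{L}}=-L^{r}$, and verify that both Hamiltonian maps return $[B_r,\mathcal{L}]$. You supply more detail than the paper does on the gradient computation and on why the representative ambiguity (operators of order $\le -m-1$) is harmless, but these are elaborations of, not departures from, the published argument.
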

\begin{proof}Observe that the $\mathcal{A}$-KP hierarchy $\dfrac{\p L}{\p t_r}=[B_r,L]$ is equivalent to
$\dfrac{\p \mathcal{L}}{\p t_r}=[B_r,\mathcal{L}]$. By definition in \eqref{Z2.20},
one obtains
$$\dfrac{\delta}{\delta\mathcal{L}} \trm\int \res L^r dx=\dfrac{r}{m}L^{r-m}.$$
With the help of \eqref{Z2.17}, one thus gets
$$H^{m(0)}(\frac{\delta h_r}{\delta \mathcal{L}})=H^{m(0)}(L^{r-m})=[B_r,\mathcal{L}]$$
and
\eqa H^{m(\infty)}(\frac{\delta g_r}{\delta \mathcal{L}})&=&-H^{m(\infty)}(L^{r})=[\mathcal{L}_{+},L^{r}_{-}]_{+}-[{\mathcal{L}}_{-},L^{r}_+]_{-} \nn\\
&=&[\mathcal{L},L^{r}_{-}]_{+}-[{\mathcal{L}},L^{r}_+]_{-}=[B_r,\mathcal{L}],\nn\eeqa
which yields this theorem.
\end{proof}

\subsection{Case $U_0=0$, i.e., $V_{m-1}=0$ }If we restrict to $V_{m-1}=0$, it is easy to check that
the first Hamiltonian structure automatically reduces to this submanifold,
but the second one is reducible if and only if
\beq \res [\mathcal{L}, \frac{\delta f}{\delta \mathcal{L}}]=0.\label{Z2.29}\eeq
which is equivalent to the condition
\beq  X_{m-1}=\dfrac{1}{m}\dsum_{i=-\infty}^{m-2}\left(\left(\begin{array}{r}-i-1\\m-i\end{array}\right)
X_i^{(m-i-1)}+\dsum_{j=i+1}^{m-1}
\left(\begin{array}{r}-i-1\\j-i\end{array}\right)(X_i\circ V_j)^{(j-i-1)}\right)\,,\label{Z2.30} \eeq
where $X_i=\dfrac{\delta f}{\delta V_i}\in \mathcal{A}$.
We denote the corresponding reduced brackets by $\{~,~\}^{m(\infty)}$ and $\{~,~\}_D^{m(0)}$.

\begin{cor}\label{cor4.5} The
coupled KP hierarchy defined in \cite{CO2006} has at least $n$ ``basic" different local
bi-Hamiltonian structures.\end{cor}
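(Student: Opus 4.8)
The plan is to recognize the Casati–Ortenzi coupled KP hierarchy as the $\mathcal{A}$-valued KP hierarchy \eqref{SZ2.3} for the fixed commutative associative algebra underlying Example~\ref{ex2.3}, and then to exploit the fact that this single algebra carries $n$ distinct Frobenius structures, one for each basic trace form $\omega_{k-1}$. First I would recall, following the work of Van de Leur cited in the introduction, that the coupled KP hierarchy of \cite{CO2006} is exactly the Lax hierarchy $\p_{t_r}L=[B_r,L]$ with a $\mathcal{Z}_n$-valued Lax operator, i.e.\ the $\mathcal{A}$-valued KP hierarchy with $\mathcal{A}=\mathcal{Z}_n$. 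The decisive point is that the product $\circ$ and the unit $e$ of $\mathcal{Z}_n$ — and hence the Lax equations themselves — are fixed once and for all; only the Frobenius (trace) datum is at our disposal.

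Next, by Example~\ref{ex2.3} the algebra $\mathcal{Z}_n$ admits the $n$ basic trace forms $\omega_0,\dots,\omega_{n-1}$ of \eqref{ZZ2.3}, each inducing via \eqref{AZ2.6} a nondegenerate symmetric bilinear form and hence a genuine Frobenius algebra $\mathcal{Z}_{n,k-1}$, $k=1,\dots,n$. For each $k$ I would run the AGD construction of Section~3 with $\trm=\omega_{k-1}$: Lemma~\ref{lem3.4} and Theorem~\ref{thm3.4} then produce a compatible pair of Poisson brackets, namely the formulae \eqref{Z2.18}–\eqref{Z2.19} read with $\omega_{k-1}$ in place of $\trm$. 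Both brackets are manifestly \emph{local}, being built from residues of products and commutators of $\Psi$DO's, and for the $U_0=0$ restriction they remain local after the reduction of Subsection~3.2 since the constraint \eqref{Z2.30} is a differential polynomial. Because the Lax hierarchy does not depend on $k$, all $n$ pairs are bi-Hamiltonian structures of one and the same coupled KP hierarchy; this locality is precisely the feature that fails for the naive matrix-trace construction mentioned in the introduction, whose second structure is nonlocal.

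Finally I would argue that these $n$ structures are genuinely different. The brackets depend on $k$ only through the trace form, and their structure constants involve $\omega_{k-1}$ evaluated on products $e_i\circ e_j$ of basis elements; the associated Hamiltonians $\tilde h_r=\frac{m}{r}\trm\int\res L^r\,dx$ likewise carry this dependence. Since \eqref{ZZ2.3} shows that $\omega_{k-1}$ reads off the $e_k$-component (together with the $e_n$-component), the functionals $\omega_0,\dots,\omega_{n-1}$ are linearly independent on $\mathcal{A}$, so the induced forms $\la\cdot,\cdot\ra_{k}$ are pairwise distinct. I would make the distinctness of the Poisson brackets concrete by exhibiting an $\mathcal{A}$-valued density $P=\res\!\big(H(\delta f/\delta\mathcal{L})\circ \delta g/\delta\mathcal{L}\big)$ and functionals $\tilde f,\tilde g$ for which $\omega_j(\int P)\neq\omega_k(\int P)$ when $j\neq k$, which linear independence guarantees exists.

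The main obstacle is exactly this last step: passing from the evident linear independence of the trace forms to the assertion that the resulting bi-Hamiltonian pencils are inequivalent rather than merely presented by different formulae. The cleanest route is to fix a small case — the coupled KdV reduction with $n=2$, where $\mathcal{A}=\mathcal{Z}_{2,k}$ as in Example~\ref{ex2.2} — write out the two component brackets explicitly for $k=0$ and $k=1$, and check directly that they differ; the general pattern, governed by the nondegeneracy built into \eqref{AZ2.6}, then follows. Proving inequivalence up to Miura-type transformations would be a stronger statement, but it is not required for the asserted ``at least $n$ different'' claim.
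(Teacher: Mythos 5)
Your proposal follows essentially the same route as the paper: identify the coupled KP hierarchy of \cite{CO2006} with the $\mathcal{Z}_n$-valued KP hierarchy (via \cite{L2007}), invoke the $n$ ``basic'' Frobenius structures $\mathcal{Z}_{n,k-1}$ of Example~\ref{ex2.3}, and apply Theorem~\ref{thm3.4} to each choice of trace form to obtain $n$ local bi-Hamiltonian pairs. The only difference is that you spend substantial effort on verifying that the $n$ resulting structures are genuinely distinct, a point the paper's proof treats as immediate; that extra care is sensible but goes beyond what the published argument contains.
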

\begin{proof}As explained in the introduction, the coupled KP hierarchy defined in \cite{CO2006,L2007} is exactly the
$\mathcal{Z}_n$-KP hierarchy.  According to Example \ref{ex2.3}, the algebra $\mathcal{Z}_n$ has at least $n$-``basic"
different ways to be realized as the Frobenius algebra.  With this,
the corollary follows immediately from Theorem \ref{thm3.4}.
  \end{proof}

\begin{de}In terms of the basis $\{v_{[i]q}\}$, the second
Poisson bracket $\{~,~\}^{m(0)}$
for $L^m$ in \eqref{Z2.15} and the reduced bracket $\{~,~\}_D^{m(0)}$ for
$L^m$ with the constraint $V_{m-1}=0$ will provide
two kinds of local \mbox{W}-type algebras, we call them the {$\mbox{W}^{(n,m)}_{\mathcal{A}\mathbb{KP}}$-algebra} and the
{$\mbox{W}^{(n,m)}_{\infty}$-algebra} respectively. Under the reduction $L^m_{-}=0$, the
corresponding algebras are called the {$\mbox{W}^{(n,m)}_{\mathcal{A}\mathbb{GD}}$-algebra} and the
{$\mbox{W}_{(n,m)}$-algebra} respectively.
\end{de}

With the use of \eqref{Z2.15} and \eqref{Z2.30}, one knows that all of them are
{\bf local matrix} generalizations of $W$-algebras. To conclude this section, two examples
will be given to illustrate our construction.

\begin{ex} Consider the $\mathcal{A}$-$\mbox{KdV}$ hierarchy with the Lax operator $L^2=\be \p^2+V$,
i.e., $L^2_{-}=0$. We denote $X=\p^{-2}X_1+\p^{-1}X_0$ and $Y=\p^{-2}Y_1+\p^{-1}Y_0.$
The condition \eqref{Z2.29} becomes $X_1=\dfrac{1}{2}X_0'$, then we have
\beq H^{2(\infty)}=[X,L^2]_+=-2X_0'\nn\eeq
and
\beq H^{2(0)}(X)=({L^2}\circ X)_+\circ {L^2}-{L^2}\circ(X\circ {L^2})_+
=2V\circ X_0'+X_0\circ V'+\frac{1}{2}X_0''' .\nn\eeq
Thus two compatible Poisson brackets of the $\mathcal{A}$-$\mbox{KdV}$ hierarchy (\cite{Zuo-2014}) are given by
\beq  \left\{\tilde{f},\tilde{g}\right\}^{2(\infty)}=2\, \trm
\int \frac{\delta f}{\delta V}\circ
\frac{\p}{\p x}\frac{\delta g}{\delta V}dx\nn \eeq
and
\beq
\left\{\tilde{f},\tilde{g}\right\}_D^{2(0)}=-\frac{1}{2}\,\trm
 \int \frac{\delta f}{\delta V}\circ
\left(\be\frac{\p^3}{\p x^3}+2V\frac{\p}{\p x}
+2\frac{\p}{\p x}V\right)\circ\frac{\delta g}{\delta V}dx.\nn\eeq

In particular, if one chooses the algebra $\mathcal{A}$ to be the algebra $\mathcal{Z}_2$
defined in Example \ref{ex2.3}, one obtains the $\mathcal{Z}_2$-KdV equation for $V=ve_1+we_2$
given by
\beq
4v_{t}-12vv_{x}-v_{xxx}=0,\quad 4w_{t}-12(vw)_x-w_{xxx}=0.
 \label{SZ4.21} \eeq
According to Corollary \ref{cor4.5}, the system \eqref{SZ4.21} can be written as
\beq \left(
\begin{array}{c}
v\\
w\end{array}\right)_t=\left(
\begin{array}{cc}
0& \p\\
\p& 0
\end{array}\right)\left(
\begin{array}{c}
\frac{\delta H_{2}}{\delta v}\\
\frac{\delta H_{2}}{\delta w}\end{array}\right)=\left(
\begin{array}{cc}
0& J_0\\
J_0 &  J_1
\end{array}\right)\left(
\begin{array}{c}
\frac{\delta H_{1}}{\delta v}\\
\frac{\delta H_{1}}{\delta w}\end{array}\right) \nn \eeq
with Hamiltonians
 $$H_1=\int_{\mathbb{S}^1}vw dx,\quad H_2= \int_{\mathbb{S}^1}(\frac{3}{2}v^2w+\frac{1}{4}vw_{xx}) dx;$$
and
\beq \left(
\begin{array}{c}
v\\
w\end{array}\right)_t=\left(
\begin{array}{cc}
0 & \p\\
\p & -\p
\end{array}\right)\left(
\begin{array}{c}
\frac{\delta \widetilde{H}_{2}}{\delta v}\\
\frac{\delta \widetilde{H}_{2}}{\delta w}\end{array}\right)=\left(
\begin{array}{cc}
0& J_0\\
J_0 & J_1-J_0
\end{array}\right)\left(
\begin{array}{c}
\frac{\delta \widetilde{H}_{1}}{\delta v}\\
\frac{\delta \widetilde{H}_{1}}{\delta w}\end{array}\right) \nn \eeq
with Hamiltonians
 $$\widetilde{H}_1= \int_{\mathbb{S}^1}(\frac{1}{2} v^2+vw) dx,\quad
 \widetilde{H}_2=\int_{\mathbb{S}^1}(\frac{3}{2}v^2w+\frac{1}{4}vw_{xx}+\frac{1}{2}v^3+\frac{1}{8}vv_{xx}) dx,$$
where $J_0=\dfrac{1}{4}\p^3+v\p+\p v$ and $J_1=w\p+\p w$.\end{ex}

\begin{ex}\label{ex2.8}{\bf$[$The $\mathcal{A}$-$\mbox{Boussinesq}$ hierarchy$]$}
In this case, we have $\mathcal{L}=\mathrm{I}_m\p^3+V_1\p+V_0$. Let us take
$\tilde{f},\, \tilde{g} \in\widetilde{\mathcal{D}}$,
 and denote
 $$ X_j=\dfrac{\delta{f}}{\delta V_j},\quad Y_j=\dfrac{\delta{g}}{\delta V_j},\quad j=0,\,1.$$
 Using the condition \eqref{Z2.30}, we have
 \beq  \dfrac{\delta{f}}{\delta\mathcal{L}}=\p^{-3}X_2+\p^{-2}X_1+\p^{-1}X_0,
\quad \dfrac{\delta{g}}{\delta \mathcal{L}}=\p^{-3}Y_2+\p^{-2}Y_1+\p^{-1}Y_0.\nn\eeq
where
$X_2=X_1'-\frac{1}{3}X_0''-\frac{1}{3} X_0V_1$ and $Y_2=Y_1'-\frac{1}{3}Y_0''-\frac{1}{3} Y_0V_1$.

A direct calculation gives two Poisson brackets of the $\mathcal{A}$-$\mbox{Boussinesq}$ hierarchy
\beq  \left\{\tilde{f},\tilde{g}\right\}^{3(\infty)}=3 \, \trm \int (X_1Y_0'+X_0Y_1')dx\nn \eeq
and
\eqa
\left\{\tilde{f},\tilde{g}\right\}_D^{3(0)}
&=&\trm \int \left(\frac{2}{3}X_0Y_0^{(5)}-X_0Y_1^{(4)}+X_1Y_0^{(4)}-2X_1Y_1^{(3)}\right)dx \nn\\
&+& \trm \int  \left(\dfrac{1}{3}X_0Y_0'-\dfrac{1}{3}X_0'Y_0\right)V_1^2 \,dx\nn\\
&+& \trm \int  \left(\dfrac{2}{3}X_0Y_0^{(3)}-\dfrac{2}{3}X_0^{(3)}Y_0+X_1''Y_0-X_0Y_1''
+X_1'Y_1-X_1Y_1'\right)V_1 \,dx\nn\\
&+& \trm \int \left(X_0Y_0''-X_0''Y_0+2X_1'Y_0-X_1Y_0'+X_0'Y_1-2X_0Y_1'\right)V_0 \,dx.\nn
\eeqa
More specifically, by analogy to the classical $W$-algebra in \cite{Baka1989, FIZ1991}, we set
\beq W_2=V_1,\quad W_3=V_0-\frac{1}{2}V_1',\nn \eeq
then for any two $\mathcal{A}$-valued test functions $F$ and $G$, we have
\eqa && \left\{\trm\int F W_2 dx\, ,\,\trm\int G W_2 dx\right\}_D^{3(0)}
=\trm \int \left(2F^{(3)}+{\bf 2}W_2F'+W_2'F\right)G\,dx,\nn\eeqa
and
 \eqa &&\left\{\trm\int F W_2 dx\, ,\,\trm\int G W_3 dx\right\}_D^{3(0)}
= \trm \int \left({\bf 3}W_3F'+W_3'F\right)G\,dx,\nn \eeqa
and
\eqa && \left\{\trm\int F W_3 dx\, ,\,\trm\int G W_3 dx\right\}_D^{3(0)}=\frac{1}{6}\, \trm \int
\left((2\,FG'-2\,F'G)W_2^2\right.\nn\\  &+& \left. FG^{(5)}\right) \,dx
+\frac{1}{12}\,\trm \int \left(2\,FG^{(3)}-2\,F^{(3)}G+3F''G'-3F'G''\right) W_2 dx.\nn \eeqa
We thus confirm that $W_{\bf k}$ for ${\bf k}=2,3$ are spin-${\bf k}$ conformally primary
$\mathcal{A}$-valued fields. But notice that the equation $\trm F W_2^2=(\trm FW_2)^2$ has no
$\mathcal{A}$-valued non-zero solution, which means the classical $\mbox{W}_3$-algebra is not
a subalgebra of the $\mbox{W}_{(n,3)}$-algebra for ${\rm{dim}}\mathcal{A}=n>1$.
\end{ex}

\section{The dispersionless $\mathcal{A}$-KP hierarchy}
Because of the similarities in the theories of dispersionless and dispersive KP equations (see \cite{Ian-1995,Zuo-2006}),
we list here the analogous results for the $\mathcal{A}$-$\mbox{dKP}$
hierarchy without proofs. We will use the following notation in this part. For an
$\mathcal{A}$-valued Laurent series of the form
$A=\dsum_{i} A_ip^i $, we denote by $A_+$ the polynomial part of the Laurent series $A$ and
$A_{-}=A-A_+$, $\res(A)=a_{-1}$. Let \beq
L=\be p+U_1 p^{-1}+U_2 p^{-2}+\cdots,
\label{Z3.1}\eeq be an $\mathcal{A}$-valued Laurent series.

\begin{de}
The $\mathcal{A}$-$\mbox{dKP}$ hierarchy is the set of equations of motion
\beq \frac{\p L}{\p t_r}=\{L^r_+,L\},\label{Z5.2}\eeq
where $\{~,~\}$ is defined by
$\{A,B\}=\frac{\p A}{\p p}\circ\frac{\p B}{\p x}-\frac{\p A}{\p x}\circ\frac{\p B}{\p p}.$
\end{de}

Let us assume that  $L^m$, $m\in\mathbb{N}$, is of the form
\beq \mathcal{L}:=L^m=\be p^m+V_{m-1}p^{m-1}+\cdots. \label{Z5.3}\eeq
Taking a dispersionless limit of Hamiltonian structures for the $\mathcal{A}$-\mbox{KP} hierarchy,
we get the first and the second Poisson  brackets of the $\mathcal{A}$-\mbox{dKP} hierarchy associated
with $\mathcal{L}$ in \eqref{Z5.3} as follows
\eqa \left\{\tilde{f},\tilde{g}\right\}^{m(\infty)}
=\trm \int\res \left( \left\{{\mathcal{L}}_{-},(\frac{\delta f}{\delta \mathcal{L}})_+\right\}_{-}
-\left\{\mathcal{L}_{+},(\frac{\delta f}{\delta \mathcal{L}})_{-}\right\}_{+}\right)\circ
\frac{\delta g}{\delta \mathcal{L}}\,dx
\label{Z5.5}\eeqa
and
\eqa\left\{\tilde{f},\tilde{g}\right\}^{m(0)}
=\trm \int\res \left(({\mathcal{L}}\circ\frac{\delta f}{\delta \mathcal{L}})_+\circ{\mathcal{L}}
-{\mathcal{L}}\circ(\frac{\delta f}{\delta \mathcal{L}}\circ{\mathcal{L}})_+\right)\circ
\frac{\delta g}{\delta \mathcal{L}}\,dx,\label{Z5.6}\eeqa
where $\tilde{f}$, $\tilde{g}\in\tilde{\mathcal{D}}$ are two functionals.
The variational derivative $\dfrac{\delta f}{\delta \mathcal{L}}$
is given by
\beq \frac{\delta f}{\delta \mathcal{L}}=\dsum_{i=-\infty}^{m-1}\frac{\delta f}{\delta V_i}p^{-i-1},
 \label{Z5.7}\eeq
 where $\frac{\delta f}{\delta V_i}$ is defined in \eqref{Z2.20}.  When we restrict
 these to the submanifold $V_{m-1}=0$, the first Hamiltonian structure automatically reduces
 to this submanifold, but the second one is reducible if and only if
\beq \res \left\{\mathcal{L}, \frac{\delta f}{\delta \mathcal{L}}\right\}=0.\label{Z5.8}\eeq
Similarly, in terms of the basis $\{v_{[i]q}\}$, the second Poisson bracket $\{~,~\}^{m(0)}$
for $L^m$ in \eqref{Z5.3} and the reduced bracket $\{~,~\}^{m(0)}_D$ for $L^m$ with the constraint
$V_{m-1}=0$ will provide two kinds of local $w$-type algebras.

\section{Conclusions}
In summary we have introduced the Frobenius algebra-valued KP hierarchy and studied
the existence of $\tau$-functions and Hamiltonian structures. Regarding scalar fields as components
of a more basic $\mathcal{A}$-valued field is a more elegant approach: it is not basis dependent and it
automatically stresses the algebraic properties more clearly. Other properties can then be traced back, for example, to the freedom in
the definition of the Frobenius form.
Via the properties of the second Hamiltonian
structures, we have obtained some local matrix generalizations of $W$-algebras.
An interesting byproduct is that the coupled KP hierarchy in \cite{CO2006} has at least
$n$-``basic" different local bi-Hamiltonian structures. The methods in the paper may clearly be applied to other theories of a similar type which have an underlying Lax equation,  for example, Toda-hierarchies and reductions of these theories \cite{ZZ-2015,TZ-2015}.

In a separate paper $\mathcal{A}$-valued Frobenius manifolds, topological quantum field theories and bi-Hamiltonian structures are constructed \cite{SZ1}. These constructions are different in character to those in this paper: they are developed without any use of Lax equations, relying on a \lq lifting\rq~construction from scalar to algebra-valued fields. There will, clearly, be an overlap, with the theory of $\mathcal{A}$-valued KdV and dKdV equations being the most obvious example.

\medskip
\noindent{\bf Acknowledgements.}
D.Zuo is partially supported by NCET-13-0550 and NSFC (11271345, 11371338) and the Fundamental Research
Funds for the Central Universities.
\medskip


\end{document}